\documentclass[10pt,aps,pra,twocolumn,superscriptaddress]{revtex4-1}
\usepackage{amsmath}
\usepackage{amsfonts}
\usepackage{amssymb, amsthm}
\usepackage{stmaryrd}
\usepackage{expdlist}
\usepackage{mathtools}
\usepackage{multirow}
\usepackage{verbatim}
\usepackage{alltt}
\usepackage{moreverb}
\usepackage{graphicx,color,graphics}
\usepackage{hyperref}
\usepackage{setspace}
\usepackage{braket}
\usepackage{calligra}
\usepackage{bm}
\usepackage[normalem]{ulem}

\usepackage{amsmath,lipsum}    % need for subequations
\usepackage{graphicx}   % need for figures
\usepackage{verbatim}   % useful for program listings
\usepackage{color}      % use if color is used in text
\usepackage{subfigure}  % use for side-by-side figures
\usepackage{hyperref}   % use for hypertext links, including those to external documents and URLs
\raggedbottom           % don't add extra vertical space
\usepackage{setspace,enumerate,amssymb,amsthm,verbatim}
\usepackage{mathtools}
\usepackage[normalem]{ulem}
\usepackage{natbib}
\usepackage{xr} 

\setcounter{secnumdepth}{4}

%\titleformat{\paragraph}
%{\itshape\small}{\theparagraph}{1em}{}
%\titlespacing{\paragraph}
%{80pt plus 4pt minus 2pt}{0.8ex plus .1ex}{0.8ex plus .1ex}

\newtheorem{theorem}{Theorem}
%\numberwithin{theorem}{subsubsection}

%\numberwithin{lemma}{subsection}

%\numberwithin{defn}{subsection}

\def\one{{\mathchoice {\rm 1\mskip-4mu l} {\rm 1\mskip-4mu l} {\rm
1\mskip-4.5mu l} {\rm 1\mskip-5mu l}}}

\providecommand{\ignore}[1]{}

\newif\ifcmnt
%  Use \cmntfalse to not see comments when it is latex'ed
%\cmntfalse
%  Use \cmnttrue to see the comments
\cmnttrue
% Enables turning them off from outside.
\ifdefined\cmntsoff\cmntfalse\fi

\ifcmnt
    \providecommand{\aucmnt}[1]{#1}

\else
    \providecommand{\aucmnt}[1]{}

\fi

\DeclareMathOperator{\Tr}{Tr}

% Defined commands

% \newcommand{\braket}[2]{\langle #1 \hspace{1pt} | \hspace{1pt} #2 \rangle}
% \newcommand{\braketq}[1]{\braket{#1}{#1}}
%\newcommand{\ketbra}[2]{| \hspace{1pt} #1 \rangle \langle #2 \hspace{1pt} |}
%\newcommand{\ketbras}[3]{| \hspace{1pt} #1 \rangle_{#3} \langle #2 \hspace{1pt} |}
%\newcommand{\ketbraq}[1]{\ketbra{#1}{#1}}
%\newcommand{\bramatket}[3]{\langle #1 \hspace{1pt} | #2 | \hspace{1pt} #3 \rangle}
%\newcommand{\bramatketq}[2]{\bramatket{#1}{#2}{#1}}
%\newcommand{\norm}[2][]{#1| \! #1| #2 #1| \! #1|}
%\newcommand{\nbox}[2][9]{\hspace{#1pt} \mbox{#2} \hspace{#1pt}}
%\newcommand{\avg}[1]{\langle #1\rangle }
%\newcommand{\ket}[1]{|#1\rangle}               %ket
%\newcommand{\colo}{\,\hbox{:}\,}              %colon in math with less space
%\newcommand{\bra}[1]{\langle #1|}              %bra
%\newcommand{\dya}[1]{\ket{#1}\!\bra{#1}}
%\newcommand{\dyad}[2]{\ket{#1}\!\bra{#2}}        %dyad
%\newcommand{\ipa}[2]{\langle #1,#2\rangle}      %abstract inner product
%\newcommand{\ip}[2]{\langle #1|#2\rangle}      %quantum inner product
%\newcommand{\matl}[3]{\langle #1|#2|#3\rangle} %matrix element
%\newcommand{\fid}{\text{fid}}
%\newcommand{\sq}{\text{sq}}
%\newcommand{\cor}{\text{cor}}
%\newcommand{\secr}{\text{sec}}
%\newcommand{\rob}{\text{rob}}
%\newcommand{\rank}{\text{rank}}

\newcommand{\GC}{\mathcal{G}}

\newcommand{\ZC}{\mathcal{Z}}

%\newcommand{\Tr}{{\rm Tr}}

               %average
%\renewcommand{\geq}{\geqslant}
%\renewcommand{\leq}{\leqslant}
%\newcommand{\mte}[2]{\langle#1|#2|#1\rangle }
%\newcommand{\mted}[3]{\langle#1|#2|#3\rangle }
%\newcommand{\eqprop}[2]{\stackrel{\tiny{#1}}{#2}}
%\newcommand{\eqpropa}[2]{\stackrel{\scriptstyle{#1}}{#2}}
\newcommand{\leak}{\text{leak}^{\text{EC}}_{\text{obs}}}
%\newcommand{\vectext}{\text{vec}}

%\newcommand{\VB}{\textsf{V}}

 %Frobenius inner product

%Greek Letters

 %Latex \th = thor n

\begin{document}
% \singlespacing

\title{Security proof of practical quantum key distribution with detection-efficiency mismatch}

\author{Yanbao Zhang}
\affiliation{Institute for Quantum Computing, University of Waterloo, Waterloo, Ontario N2L 3G1, Canada}
\affiliation{Department of Physics and Astronomy, University of Waterloo, Waterloo, Ontario N2L 3G1, Canada}
 \affiliation{NTT Basic Research Laboratories and NTT Research Center for Theoretical Quantum Physics, NTT Corporation, 3-1 Morinosato-Wakamiya, Atsugi, Kanagawa 243-0198, Japan}

\author{Patrick J. Coles}
\affiliation{Institute for Quantum Computing, University of Waterloo, Waterloo, Ontario N2L 3G1, Canada}
\affiliation{Department of Physics and Astronomy, University of Waterloo, Waterloo, Ontario N2L 3G1, Canada}
\affiliation{Theoretical Division, Los Alamos National Laboratory, Los Alamos, NM 87545, US}

\author{Adam Winick}
\affiliation{Institute for Quantum Computing, University of Waterloo, Waterloo, Ontario N2L 3G1, Canada}

\author{Jie Lin}
\affiliation{Institute for Quantum Computing, University of Waterloo, Waterloo, Ontario N2L 3G1, Canada}
\affiliation{Department of Physics and Astronomy, University of Waterloo, Waterloo, Ontario N2L 3G1, Canada}

\author{Norbert L\" utkenhaus}
 \affiliation{Institute for Quantum Computing, University of Waterloo, Waterloo, Ontario N2L 3G1, Canada}
 \affiliation{Department of Physics and Astronomy, University of Waterloo, Waterloo, Ontario N2L 3G1, Canada}

% \date{\today}
\begin{abstract}
Quantum key distribution (QKD) protocols with threshold detectors are driving high-performance QKD demonstrations.  The corresponding security proofs usually assume that all physical detectors have the same detection efficiency. However, the efficiencies of the detectors used in practice might show a mismatch depending on the manufacturing 
and setup of these detectors. A mismatch can also be induced as the different spatial-temporal modes of an incoming signal might couple differently to a detector. Here we develop a method that allows to provide security proofs 
without the usual assumption. Our method can take the detection-efficiency mismatch into account without having to
restrict the attack strategy of the adversary. Especially, we do not rely on any photon-number cut-off of incoming
signals such that our security proof is directly applicable to practical situations.   
We illustrate our method for a receiver that is designed for polarization encoding and is sensitive to a number 
of spatial-temporal modes. In our detector model, the absence of quantum interference between any pair of  
spatial-temporal modes is assumed. For a QKD protocol with this detector model, we can perform a security proof 
with characterized efficiency mismatch and without photon-number cut-off assumption. Our method 
also shows that in the absence of efficiency mismatch in our detector model, the key rate increases if the loss 
due to detection inefficiency is assumed to be outside of the adversary's control, as compared to the view where for
a security proof this loss is attributed to the action of the adversary. 
\end{abstract}

\maketitle

\section{Introduction}
For practical quantum key distribution (QKD)~\cite{review:qkd2009} using photon-counting techniques (discrete variable QKD), information is usually encoded in optical signals that contain multiple photons. To decode the 
information, one measures the optical signals usually with threshold detectors which cannot tell apart
the number of incoming photons.  Security proofs of practical QKD protocols 
usually assume that all threshold detectors used have the same efficiency. 
Under this assumption, one can push the detection efficiency into the transmission channel, which is under 
the control of an adversary known as Eve. Thus the transmission loss and the inefficiencies of the detectors can be lumped together, and one can apply a security proof that applies to the  new increased effective transmission loss followed by ideal threshold detectors 
with perfect efficiency~\cite{lutkenhaus1999}.

In practice, however, it is not an easy job to build two detectors that have exactly the 
same efficiency. For example, the two detectors may be fabricated by different processes
and so a mismatch between their efficiencies is induced. In the presence of efficiency mismatch,
the different values for detection inefficiency cannot be lumped together and further treated as 
a single value for the loss over the transmission channel.  Therefore, existing security-proof 
techniques cannot be applied.

Even with a single detector, an efficiency mismatch can be induced by Eve.
Suppose that the response of this detector to a photon depends on its degrees 
of freedom such as spatial mode, frequency, or arrival time. These degrees of freedom 
are not necessarily being used to encode information. If Eve can manipulate these 
degrees of freedom, then an effective efficiency mismatch is induced. When the induced 
mismatch is large enough, powerful attacks on QKD systems exist, as demonstrated in 
Refs.~\cite{Zhao2008, Lydersen2010-2, Gerhardt2011, Rau2014, Shihan2015, Chaiwongkhot2019}. 
The intuition behind such attacks is as follows: The efficiency mismatch usually causes a specific 
outcome to be detected more frequently than the other outcomes in a chosen measurement 
basis; as a result, Eve can guess the outcomes correctly with a higher probability in 
the presence of efficiency mismatch than in the absence of efficiency mismatch. 
In typical experiments the efficiency mismatch may not seem significant, but it still 
means that the security cannot be formally proven by existing techniques.

In this paper we develop analytic tools that allow, subsequently, to prove with numerical 
methods the security in the presence of detection-efficiency mismatch. More precisely, 
we consider a set-up designed for polarization encoding, where each threshold detector used 
by the receiver Bob is sensitive to an incoming signal in a number of spatial-temporal modes. 
We assume a detector model where no quantum interference between any pair of spatial-temporal 
modes would take place as the incoming signal passes through the receiver or is being 
detected by a detector.  However, the optical loss experienced by the signal can depend 
on its spatial-temporal mode.  For the above detector model, the developed method can be 
applied given arbitrary characterized  efficiency mismatch.   To demonstrate our approach, 
we apply it to a Prepare{\&}Measure BB84-QKD protocol~\cite{BB84}.  Here we study the general 
case where the optical signals received by Bob may contain an unbounded number of photons such
that their states live in an infinite-dimensional space.  We can lower-bound 
the secret-key rate as a function of detection-efficiency mismatch and observed statistics. 
With our method, we can also study the individual effects of transmission loss and detection 
inefficiency on the secret-key rate. Our method is transferable to other QKD protocols.  
We note that Refs.~\cite{Fung2009, Lydersen2010, Bochkov2019, Ma2019} studied the security proof of 
the BB84-QKD protocol in the presence of efficiency mismatch \emph{but} under the assumption that Bob 
receives no more than one photon at each round. However, this assumption cannot be justified in practical
implementations of QKD where threshold detectors are being used. 

We also remark that the spatial-temporal-mode-dependent efficiency-mismatch models studied by 
us (see Sect.~\ref{sect:exp_con} for details) are different from those mode-dependent mismatch models 
studied in the previous work~\cite{Fung2009}. As we assume the absence of quantum interference between 
any pair of spatial-temporal modes throughout the measurement process, the measurement operators 
are block-diagonal with respect to various photon-number subspaces, where in 
each photon-number subspace the number of photons in each spatial-temporal mode is specified. See our 
previous work~\cite{Zhang2017} for the explicit expressions of these measurement operators. On the other 
hand, the previous work~\cite{Fung2009} studied the case that a quantum interference between a pair 
of auxiliary modes is possible, where the efficiency mismatch depends on these auxiliary modes.
Therefore, the detector model in Ref.~\cite{Fung2009} is more general than ours, albeit the security 
proof in Ref.~\cite{Fung2009} relies on a photon-number cut-off. Note that we believe that the 
interference between spatial-temporal modes will not play a significant role in a practical 
measurement setup. If we are wrong in this belief, our approach could be generalized 
to the more general detector model at the expense of more computational resources in our numerical 
key-rate evaluation. 

The rest of the paper is structured as follows: In Sect.~\ref{sect:exp_con} we describe the basic setup for an optical BB84-QKD implementation with a special emphasis on the description of the spatial-temporal modes coupled to the detectors. Then we explain our method in Sect.~\ref{sect:method}, where we also apply it to the described setup. In order to show the implication of our proof methods, we require a toy-model that describes what observations we would expect in real experiments, which we do in Sect.~\ref{sect:results}. There we also show the secret-key rates that we obtain for setups that exhibit  detection-efficiency mismatch. We summarize our findings in 
Sect.~\ref{sect:conclusion}. We note that all detectors considered in the rest of the paper are threshold detectors by
default.

\section{Experimental configuration} 
\label{sect:exp_con}
The method that we develop in this article is about the treatment and analysis of the detector. Therefore, to lay out and illustrate the method we develop,  it is sufficient to use the simple  BB84 protocol \cite{BB84}, which we consider with an ideal single-photon source, but with threshold detectors monitoring full optical modes. Without loss of generality, we use the polarization-encoding language. 

For our theoretical analysis, we use the entanglement-based formulation of Bennett, Brassard and 
Mermin~\cite{BBM1992}. This approach has been later generalized for general QKD protocols to the source-replacement 
scheme~\cite{Marcos2004}. This source-replacement scheme, in a thought-setup, realizes the source by preparing internally to the source a bi-partite entangled state. Measurements on one system effectively prepare the remaining system in the desired signal states with the prescribed probabilities. In the case of the BB84 protocol with an 
ideal single-photon source, the internal entangled state in the thought-set-up is the maximally entangled state
\begin{equation}
\Ket{\Phi}_{\textrm{AA}'}=\frac{1}{\sqrt{2}}(\Ket{H}_\textrm{A} \Ket{H}_{\textrm{A}'}+\Ket{V}_\textrm{A} \Ket{V}_{\textrm{A}'}), 
\label{eq:ent_AB}
\end{equation}
where $\Ket{H}$ and $\Ket{V}$ are horizontally and vertically polarized 
single-photon states, respectively. System $\textrm{A}'$  is prepared in the signal states of the BB84 protocol as  Alice uniformly randomly selects to measure the system $\textrm{A}$ in the horizontal/vertical ($H/V$) basis or the diagonal/anti-diagonal ($D/A$) basis.  System $\textrm{A}'$ enters the channel controlled by Eve and will emerge as system B at Bob's site. At that stage, the signal is not necessarily a single-photon signal, but can (due to Eve's action) be in any state of the optical modes supported by the detectors. For example, Eve might amplify the signal using an optical amplifier or replace the signals with multi-photon states at her discretion.  Bob thus has to perform a measurement on the full optical modes, not on the single-photon signals. In our setup, he randomly selects to measure the signal in either the $H/V$ basis or the $D/A$ basis of the optical modes supported by his device. We call the above procedure of preparing, distributing and measuring signal states a \emph{round}. 

After a large number of rounds, with the data recordings that detail Alice's effective signal choices and Bob's measurement outcomes, Alice and Bob continue the QKD protocol using the usual steps of testing, sifting, key map, error correction, and privacy amplification to obtain secret keys.  Our method can be easily generalized for other protocols that use, for example, weak coherent pulses as signal states, but the single-photon source example 
studied in this work is sufficient to demonstrate our method, which is about the detection side.

\begin{figure}% [htb!]
   \includegraphics[scale=0.58,viewport=5.5cm 14.5cm 16cm 20.5cm]{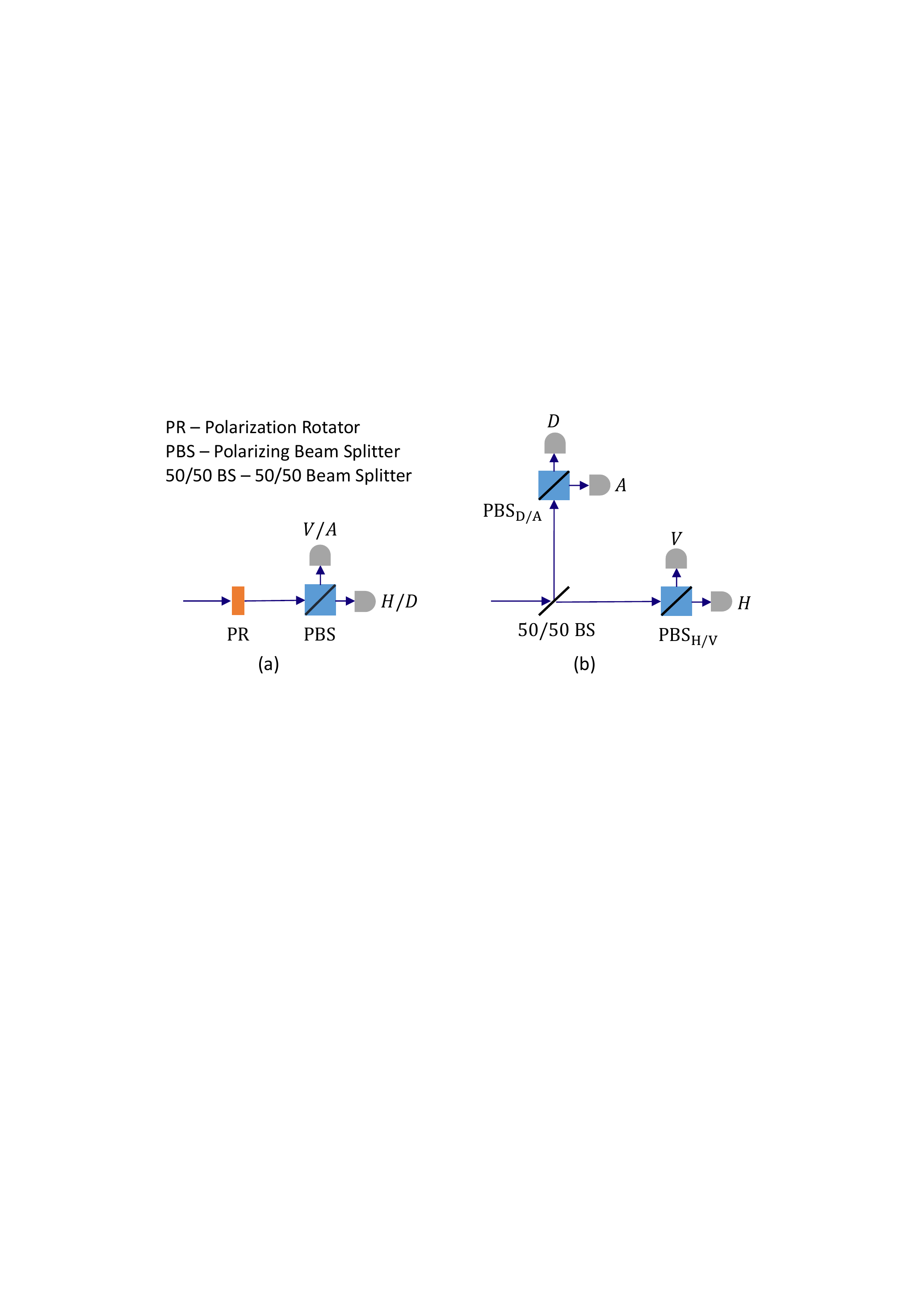} 
   \caption{Schematic of Bob's measurement device: panel (a) and (b) describe
   the active-detection and passive-detection schemes respectively. To 
   actively or passively select a measurement basis, a polarization rotator 
   or a 50/50 beam splitter is used. Under each basis, a polarizing beam splitter 
   and two detectors are used to measure the polarization state of an incoming 
   optical signal. Each detector is labelled by the associated measurement outcome.} 
   \label{setup} 
\end{figure}

So let us turn our attention to Bob's detection: Either the active- or passive-detection 
scheme, as depicted in Fig.~\ref{setup}, can be exploited.   As the detectors 
used in each scheme are threshold detectors, each detector can respond to an 
incoming optical signal only in two different ways, click or no click. The detectors might 
respond to different modes (frequency, timing, etc).

As stated in the introduction, there are two scenarios where an detection-efficiency mismatch may exist. 
Let us start with the first one. 
Due to the fabrications or setups in practice, the two detectors shown in Fig.~\ref{setup}(a) for the 
active-detection scheme may have different efficiencies $\eta_{H/D}$ and $\eta_{V/A}$. Similarly, 
the four detectors in Fig.~\ref{setup}(b) for the passive-detection scheme may have efficiencies 
$\eta_H$, $\eta_V$, $\eta_D$ and $\eta_A$ respectively. Here, the subscripts indicate the detectors 
used in a scheme. We call this kind of mismatch the spatial-temporal-mode-independent mismatch, 
in contrast to the following mismatch which depends additionally on the spatial-temporal modes 
chosen by Eve.   

The second scenario is that of an active Eve.  By manipulating the spatial-temporal 
mode of an optical signal, Eve can change the coupling of the signal with a detector, 
resulting in a change in the effective detection efficiency of the detector. 
Especially in free-space QKD it is possible for Eve to change the angle of an incoming  
signal~\cite{Rau2014,Shihan2015, Chaiwongkhot2019} to influence the coupling of the signal with the active 
detection area of a detector, while for fiber-based signals simple time delays can be 
introduced~\cite{Zhao2008} to exploit uneven aligned detection time windows. 
Therefore, in a setup with several detectors,  the efficiencies of these detectors can not only 
differ from each other but also depend on the spatial-temporal modes coupled to the detectors,
giving rise to the so-called spatial-temporal-mode-dependent mismatch. In this work 
we analyze the security in both above scenarios. 
  
Bob's detectors may respond to a large number of spatial-temporal modes. If the detection 
efficiencies related to these modes differ strongly from each other, it might become possible 
for Eve to control Bob's detection events thoroughly by sending the signals to the modes that 
couple particularly well only to a specific detector of Bob 
for which Eve desired to cause a detection event. For this attack to be possible in its extreme 
form, the number of modes must be equal to, or larger than, the number of detectors in the setup. 
For this reason, we choose the number of controllable modes to be equal to the number of detectors. 
In order to obtain visually simple illustrations of the secret-key rates, we choose mismatch models 
parametrized by two values for the efficiencies: a high value $\eta_1$ for one 
detector, and a lower value $\eta_2$ for the other detectors, as shown in Tables~\ref{active_mismatch} 
and~\ref{passive_mismatch}.  We emphasize that these mismatch models are considered just for 
ease of visual presentation, as the approach developed here can be exploited with an arbitrary 
mismatch model. To analyze the security of QKD systems, for example in a certification process, 
the choice of the mismatch model and its parameters will need to be justified and characterized  
in practice.

\begin{table}
\caption{Spatial-temporal-mode-dependent mismatch model in the active-detection scheme, 
where $0\leq \eta_2 \leq \eta_1 \leq 1$. The efficiencies of the two detectors labelled in 
Fig.~\ref{setup}(a) are listed in a column, where each column corresponds to 
a spatial-temporal mode.}
\begin{center} 
\begin{tabular}  {|c |c|c|}

 \hline
 & Mode 1  &   Mode 2  \\ 
  \hline
Detector `$H/D$'  & $\eta_1$ & $\eta_2$  \\
  \hline
Detector `$V/A$' & $\eta_2$ & $\eta_1$   \\
  \hline
\end{tabular}
\newline \\
\end{center}
\label{active_mismatch} 
\end{table}

\begin{table}
\caption{Spatial-temporal-mode-dependent mismatch model in the passive-detection scheme, 
where $0\leq \eta_2 \leq \eta_1 \leq 1$. The efficiencies of the four detectors labelled in 
Fig.~\ref{setup}(b) are listed in a column, where each column corresponds to 
a spatial-temporal mode.}
\begin{center}
\begin{tabular}  {|c |c |c| c| c|}
 \hline 
 & Mode 1 & Mode 2 & Mode 3 & Mode 4 \\
  \hline
Detector `$H$'  & $\eta_1$ & $\eta_2$ & $\eta_2$ & $\eta_2$  \\
  \hline
Detector `$V$'  & $\eta_2$ & $\eta_1$  & $\eta_2$ & $\eta_2$ \\
  \hline
Detector `$D$'  & $\eta_2$ & $\eta_2$ & $\eta_1$ & $\eta_2$ \\
  \hline
Detector `$A$'  & $\eta_2$ & $\eta_2$ & $\eta_2$ & $\eta_1$ \\
  \hline
\end{tabular}
\newline \\
\end{center}
\label{passive_mismatch} 
\end{table}

\section{Key-rate calculation method} 
\label{sect:method}

\subsection{Formulation of key-rate calculation as a convex-optimization problem}
\label{sect:convex_optimization}
The asymptotic key rate certifiable against all collective attacks~\cite{Devetak2005}  
is given by the difference between two terms, which are associated with privacy amplification 
(PA) and error correction (EC) respectively. The EC term depends only the the measurement 
statistics and can be calculated without any further information on the implementation 
of the QKD protocol.  The main difficulty 
of the security proof relies on how to obtain a lower bound on the PA term. As shown in 
Refs.~\cite{Coles2016,Winick2018}, a reliable numerical lower bound 
on the PA term can be provided by solving a convex-optimization problem. In the following, 
we will give a brief review of the theory behind that reformulation. 

In a generic QKD protocol, the measurement statistics in an experiment are summarized as a probability 
distribution $p_{\textrm{AB}}(x,y)$, where $x$ and $y$ are random variables corresponding to 
the events detected by  Alice and Bob respectively. The corresponding measurement operators are $M_x^{\textrm{A}}$ 
and $M_y^{\textrm{B}}$. In addition, for the techniques shown in this paper, we will be able to 
provide from experimental observations lower bounds on the probability of at most $k$ photons arriving at Bob. 
These bounds will be brought in as additional explicit constraints in the convex-optimization problem. 
To formulate the corresponding constraints, we introduce the projectors $\Pi_k$ onto 
the photon-number subspace of Bob containing at most $k$ photons, and the corresponding lower bound on its 
expectation value as $b_k$. Then, the calculation of the PA term, denoted by $\alpha$, can be written as the 
convex-optimization problem~\cite{Coles2016,Winick2018}
\begin{equation}\label{eq:key_minimization}
\begin{array}{rc}
\alpha:=& \min_{\rho_{\textrm{AB}}}  D(\GC(\rho_{\textrm{AB}}) || \ZC(\GC(\rho_{\textrm{AB}}))) \\
 \text{subject to}& \rho_{\textrm{AB}}\geq 0, \Tr\left(\rho_{\textrm{AB}}\right)=1  \\
& \Tr\big((M_x^{\textrm{A}}\otimes M_y^{\textrm{B}})\rho_{\textrm{AB}}\big) =p_{\textrm{AB}}(x,y)\\
& \Tr(\Pi_k \rho_{\textrm{AB}})\geq b_k\; .
\end{array}
\end{equation}
Here, $D(\sigma ||\tau):= \Tr(\sigma \log_2 \sigma) -\Tr(\sigma \log_2 \tau)$ is the relative entropy, $\GC$ is 
the post-selection map, and $\ZC$ is the quantum channel describing the key map of the QKD protocol (see below for the details). In our applications we will later choose for $k\in\{1,2\}$, or use even the constraints for both values of $k$.  We remark that both the objective function and constraints are convex in the optimization variable $\rho_{\textrm{AB}}$. 

Once we obtain a reliable lower bound $\beta$ on the PA term $\alpha$ of Eq.~\eqref{eq:key_minimization} as 
$\beta \leq \alpha$ according to the numerical method developed in Ref.~\cite{Winick2018}, the asymptotic key rate 
$K$ per round is bounded by
\begin{equation}\label{eq:key_lower_bound}
K \geq K_\text{lb}\doteq \beta -\leak,
\end{equation}
where $\leak$ denotes the amount of information leaked to Eve per round of the protocol 
during error correction. This takes automatically into account any post-selection mechanism of the protocol,
 as any jointly discarded signals do not cause an error-correction cost. 
Likewise, the PA cost $\beta$ automatically takes care of the same post-selection process, so that the total key rate $K$ is counted as per round of the protocol. As we are discussing key rates in the asymptotic limit of a large number of exchanged signals, the reduction by any fraction of signals that is utilized to estimate the observed probability distribution $p_{AB}(x,y)$ of measurement results and other finite-size effects are negligible. Furthermore, the 
security proofs under collective and coherent attacks are equivalent in this limit~\cite{renner:qc2005}, and hence 
our key-rate lower bound $K_\text{lb}$ in principle holds for coherent attacks.   We remark 
that the numerical method developed in Ref.~\cite{Winick2018} obtains a key-rate lower bound by the following two 
steps: First, by an iterative method we find a near-optimal solution of the convex-optimization problem in 
Eq.~\eqref{eq:key_minimization} and thus an upper bound on the PA term $\alpha$; second, we take advantage of the 
duality principle satisfied by convex optimization to obtain a reliable lower bound $\beta$ on the PA term 
$\alpha$. The key-rate lower bound $K_\text{lb}$ obtained according to the numerical method developed in 
Ref.~\cite{Winick2018} is reliable in the sense that the lower bound $K_\text{lb}$ is valid even considering 
the finite precision in floating-point representations. Moreover,  % as noted in Ref.~\cite{Winick2018}, 
the imprecision in function evaluations is estimated to be at the level of $10^{-8}$ according to the 
CVX package used by us for solving convex programs~\cite{cvx,gb08}, although a rigorous analysis is currently 
missing and deserves further investigation in future work. The estimated function-evaluation imprecision is much 
smaller than the numerical key-rate lower bounds reported in Sect.~\ref{sect:results}, suggesting that our 
numerical key-rate lower bounds are reliable even considering the effect of function-evaluation imprecision. 
% At the same time, we can be bounded in principle~\cite{Winick2018}, 
 % implementation-independent errors such as floating-point representation errors. 

The map $\GC$ in the objective function of Eq.~\eqref{eq:key_minimization} describes the post-selection after Alice's
and Bob's public announcements for sifting. For simplicity, we concentrate here on the case where to distill secret keys Alice and Bob keep only those signals where both measured in the $H/V$ basis.  We also note that for optical implementations, the announcements usually used for sifting are slightly more involved than the simple basis-dependent sifting of the BB84 protocol. The reason is that the potential presence of multiple photons in the incoming signals can cause several detectors to show detection events simultaneously. If Bob uses the active-detection scheme, the sifting announcement by Bob consists of the declaration whether he used the  $H/V$ basis measurement, and whether at least one detector fired.  However, if Bob uses the passive-detection scheme, we have to decide what to do with events where we have multiple detections across the groups associated with different polarization bases (cross clicks), for example both the $H$ and the $D$ detector firings. Here we make the choice to keep only those events where either the $H$, the $V$, or both the $H$ and $V$ (denoted as $HV$ event)  detectors fire, while all other events (no clicks, clicks only in any of the $D$ and the $A$ detectors, or cross clicks) are being discarded. In order to achieve this goal, Alice publicly announces the basis choice where one of two bases is chosen uniformly randomly at each round, and  Bob announces whether the desired events are observed. This corresponds to applying the post-selection map 
\begin{equation}\label{eq:post-selection_map}
\GC (\rho_{\textrm{AB}})=G \rho_{\textrm{AB}} G^{\dagger},
\end{equation}  
where $G=\frac{1}{\sqrt{2}}\one^{\textrm{A}} \otimes \sqrt{M_H^{\textrm{B}}+M_V^{\textrm{B}}+M_{HV}^{\textrm{B}}}$ is a Kraus operator.  Here, $\one^{\textrm{A}}$ is the identity operator in the state space of Alice, and the 
positive-operator valued measure (POVM) elements $M_H^{\textrm{B}}$, $M_V^{\textrm{B}}$, and $M_{HV}^{\textrm{B}}$ for Bob have been derived in Appendix A and B of Ref.~\cite{Zhang2017}, with the remark that for the active-detection scheme we need to put the coefficient $1/2$ before each POVM element shown in Ref.~\cite{Zhang2017} to account for  Bob's probability of selecting each measurement basis.

After the public announcements and the corresponding post-selection step, Alice chooses a key map, which is represented by a quantum channel $\ZC$. The key map is a function whose input is Alice's measurement outcome in the key-generation basis and whose output is a key value, $0$ or $1$. Suppose that we make a particular choice of key map here, namely that Alice's outcomes $H$ and $V$ are mapped to key values $0$ and $1$, respectively, and that the corresponding POVM elements $M_H^{\textrm{A}}$ and $M_V^{\textrm{A}}$ are projective (see Appendix~\ref{sect:Alice_operators}). The application of the key map corresponds to the application of the quantum channel 
\begin{align}\label{eq:key_map}
\ZC(\GC(\rho_{\textrm{AB}}))&=(M_H^{\textrm{A}}\otimes \one^{\textrm{B}}) \GC(\rho_{\textrm{AB}}) (M_H^{\textrm{A}}\otimes \one^{\textrm{B}}) \notag \\
&+(M_V^{\textrm{A}}\otimes \one^{\textrm{B}}) \GC(\rho_{\textrm{AB}}) (M_V^{\textrm{A}}\otimes \one^{\textrm{B}}). 
\end{align}

Given the measurement statistics $p_{\textrm{AB}}(x,y)$, the lower bounds $b_k$ on the photon-number distribution, 
the post-selection map $\GC$, and the key-map-realizing quantum channel $\ZC$, 
in principle we can run numerical optimization to obtain a reliable lower bound of the minimization problem in 
Eq.~\eqref{eq:key_minimization}. However, 
for the situation studied, the number of photons arriving at Bob is unbounded and so the dimension 
of the quantum state $\rho_{\textrm{AB}}$ is infinite. For this reason we need to develop techniques that allow us 
to simplify the optimization problem such that a reliable key-rate lower bound can be numerically obtained  
by optimizing over only finite-dimensional quantum states. These techniques are described in the next two subsections.

\subsection{Simplification of the convex-optimization problem: Flag-state squasher}
\label{sect:simplification}
Since Bob's measurement POVMs are block-diagonal with respect to 
the subspaces associated with total photon numbers across all modes~\cite{Zhang2017}, 
we can assume without loss of generality that Eve performs a quantum non-demolition (QND) 
measurement of the total photon number after her interaction with the signals, 
and before  their arrivals at Bob's side. As a consequence, the state $\rho_{\textrm{AB}}$ can be assumed, 
without loss of generality, to be block-diagonal with the same subspace structure, meaning 
that the state takes the form
\begin{equation}
\rho_{\textrm{AB}}=\bigoplus_{n=0}^{\infty}p_n\rho_{\textrm{AB}}^{(n)}.
\label{rho_block-diagonal}
\end{equation}
The weight of each subspace carrying a total number of $n$ photons is given by the 
corresponding probability $p_n$, and the corresponding normalized conditional state 
is denoted by  $\rho_{\textrm{AB}}^{(n)}$.
% In order to obtain 
%a key-rate lower bound valid under this attack, we can optimize over only the states 
%$\rho_{\textrm{AB}}$ that are block-diagonal with respect to various photon-number subspaces.
%We can also justify the assumption by the following two observations: 1) The objective function 
%in Eq.~\eqref{eq:key_minimization} is a coherence measure~\cite{Coles2016, Coles2012}; 2) 
%The QND measurement corresponds to an incoherent completely positive and trace preserving (CPTP) 
%map, and the coherence measure cannot increase under any incoherent CPTP map~\cite{Baumgratz2014}. 

Considering the block-diagonal structure of the state and  Bob's measurement POVMs, we can write 
\begin{align} \label{eq:actual_case}
& \rho_{\textrm{AB}}=p_{n\leq k} \rho_{\textrm{AB}}^{(n\leq k)} \; \bigoplus \;  (1-p_{n\leq k})\rho_{\textrm{AB}}^{(n>k)},\notag \\
& M_y^{\textrm{B}}=M_{y,n\leq k}^{\textrm{B}} \; \bigoplus \; M_{y,n>k}^{\textrm{B}}, 
\end{align}
where $k$ is a free parameter chosen in the security proof and $p_{n\leq k}$ is the probability 
that no more than $k$ photons arrive at Bob. The $(n \leq k)$-photon subspace is of finite
dimension, which is compatible with the numerical key-rate optimization framework.  
On the other hand, the $(n>k)$-photon subspace is infinite dimensional, which is not directly suitable to be handled  
by our numerical method. To resolve this problem, we introduce the {\em flag-state squasher}. The general framework of squashing models that map large-dimensional state/measurement descriptions without loss of generality to 
lower-dimensional systems has been described in Refs.~\cite{normand2008, Kiyoshi2008, Gittsovich2014}.

\begin{theorem}{\bf Flag-State Squasher}
\label{th:flagstatesquasher}
Suppose that we have a POVM with elements $M_y$, where $y \in \{1, \dots, J\}$, such that each element can be written in a block-diagonal form $M_{y,n\leq k}\oplus M_{y,n>k}$, with an associated Hilbert space structure given by ${\cal H}_{n \leq k} \oplus {\cal H}_{n>k}$.  Then there exists a completely positive trace preserving (CPTP) map $\Lambda$  (referred to as a squashing map) from ${\cal H}_{n \leq k} \oplus {\cal H}_{n>k}$ to ${\cal H}_{n\leq k} \oplus {\cal H}_{J}$, where the dimension  $dim({\cal H}_{J})=J$, such that $\Tr\left( \rho M_y \right) = \Tr\big(\Lambda(\rho) \tilde{M}_y \big) \; \;  \forall \rho \in {\cal H}_{n\leq k} \oplus {\cal H}_{n>k}$  with 
\begin{equation}
\label{eq:virtual_case}
\tilde{M}_y = M_{y,n\leq k} \oplus |y\rangle\langle y| \; ,
\end{equation}
 where the states $|y\rangle$ form an orthonormal basis of ${\cal H}_{J}$. 
\end{theorem}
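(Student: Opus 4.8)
The plan is to construct the squashing map $\Lambda$ explicitly as a direct sum of two pieces acting on the two orthogonal sectors ${\cal H}_{n\leq k}$ and ${\cal H}_{n>k}$, and then verify the two required properties (complete positivity plus trace preservation, and the statistics-matching identity) separately on each sector. On the finite-dimensional sector ${\cal H}_{n\leq k}$ the map will simply be the identity channel, which trivially matches the block $M_{y,n\leq k}$ of every POVM element. All the work is therefore on the infinite-dimensional sector ${\cal H}_{n>k}$, where the map must compress everything into the $J$-dimensional flag space ${\cal H}_J$ while preserving the outcome probabilities dictated by the blocks $M_{y,n>k}$.

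The key observation is that on ${\cal H}_{n>k}$ the operators $\{M_{y,n>k}\}_{y=1}^J$ form a POVM (they are positive and, since the original $M_y$ sum to the identity and the decomposition is block-diagonal, the blocks sum to the identity on ${\cal H}_{n>k}$). Given any POVM $\{N_y\}$ on a Hilbert space ${\cal H}$ with $J$ outcomes, there is a standard "measure-and-prepare" channel: define $\Lambda_{n>k}(\sigma) = \sum_{y=1}^J \Tr(N_y \sigma)\,|y\rangle\langle y|$, where $\{|y\rangle\}$ is an orthonormal basis of ${\cal H}_J$. This is manifestly CPTP (it is a convex combination of the trace-and-replace channels $\sigma\mapsto\Tr(N_y\sigma)|y\rangle\langle y|$, and trace preservation follows from $\sum_y N_y = \one$). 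The full squashing map is then $\Lambda = \mathrm{id}_{{\cal H}_{n\leq k}} \oplus \Lambda_{n>k}$, which is CPTP as a direct sum of CPTP maps and sends ${\cal H}_{n\leq k}\oplus{\cal H}_{n>k}$ into ${\cal H}_{n\leq k}\oplus{\cal H}_J$.

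It remains to check the identity $\Tr(\rho M_y) = \Tr(\Lambda(\rho)\tilde{M}_y)$ for all $\rho$. Writing $\rho$ in its block form relative to ${\cal H}_{n\leq k}\oplus{\cal H}_{n>k}$ — which we may do because the cross terms are annihilated by the block-diagonal $M_y$ and by $\Lambda$ — this splits into a sum of two traces. The ${\cal H}_{n\leq k}$ contribution is immediate since $\Lambda$ acts as the identity there and $\tilde{M}_y$ restricts to $M_{y,n\leq k}$. For the ${\cal H}_{n>k}$ contribution, denote by $\rho^{>}$ the corresponding (unnormalized) block; then $\Lambda_{n>k}(\rho^{>}) = \sum_{y'} \Tr(M_{y',n>k}\rho^{>})\,|y'\rangle\langle y'|$, and pairing with the flag part $|y\rangle\langle y|$ of $\tilde{M}_y$ picks out exactly the term $\Tr(M_{y,n>k}\rho^{>})$, which is what we need. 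Summing the two contributions reproduces $\Tr(\rho M_y)$.

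I do not expect any genuine obstacle here; the statement is essentially a packaging of the measure-and-prepare construction. The only points requiring a little care are: (i) confirming that $\{M_{y,n>k}\}$ is indeed a valid POVM on the infinite-dimensional sector, i.e.\ that completeness $\sum_y M_y = \one$ survives the block decomposition — this is where the hypothesis that \emph{all} $M_y$ are simultaneously block-diagonal with respect to the same splitting is used; (ii) making sure that the definition of $\Lambda_{n>k}$ and the trace identity are interpreted correctly for operators on an infinite-dimensional space (trace-class $\rho$, bounded $M_y$), which is routine; and (iii) noting that we never need $\rho$ itself to be block-diagonal — the identity holds for arbitrary $\rho$ because the off-block coherences contribute nothing to either side. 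A brief remark that in the QKD application $\rho$ \emph{can} be taken block-diagonal anyway (by the QND argument preceding the theorem) can be appended, but it is not needed for the proof of the theorem as stated.
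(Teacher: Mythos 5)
Your construction is exactly the paper's own proof: the identity on ${\cal H}_{n\leq k}$ combined with a measure-and-prepare channel $\rho_{\textrm{ll}}\mapsto\sum_{y}\Tr\bigl(\rho_{\textrm{ll}}M_{y,n>k}\bigr)\,|y\rangle\langle y|$ on ${\cal H}_{n>k}$, with the cross-block coherences discarded. Your write-up is in fact slightly more careful than the paper's (which just states the map and calls the verification straightforward), since you explicitly check that $\{M_{y,n>k}\}$ is a POVM, that the map is CPTP, and that the off-diagonal blocks of $\rho$ contribute to neither side of the identity.
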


\begin{proof}
 We need to show that the CPTP map $\Lambda$ exists with the desired properties. This can be done by an explicit construction as indicated in Fig.~\ref{flagsquasher}. For this purpose, we consider a general input state given in block form $\rho = \left( \begin{array}{cc} \rho_{\textrm{ss}} & \rho_{\textrm{sl}}\\ \rho_{\textrm{ls}}& \rho_{\textrm{ll}}\end{array} \right)\; ,$  where index `s' refers to the small subspace ${\cal H}_{n\leq k}$ and index `l' to the large subspace ${\cal H}_{n>k}$.  We can then describe the action of the squashing map $\Lambda$ by its action onto an arbitrary input state of the above form as
 \begin{equation}
\label{eq:virtual_case_state}
\Lambda(\rho)=  \left( \begin{array}{cc} \rho_{\textrm{ss}} & 0\\ 0 & \sum_{y=1}^J \Big(\Tr\left(\rho_{\textrm{ll}} M_{y,n>k} \right) \; |y\rangle\langle y|\Big) \end{array} \right)\; .
\end{equation}
It is straightforward to see that the state $\Lambda(\rho)$ satisfies the properties 
$\Tr\left(\rho M_y \right) = \Tr\big(\Lambda(\rho) \tilde{M}_y \big) \; \; \forall y$ 
required by a flag-state squasher. 
\end{proof}

\begin{figure}%[htb!]
   \includegraphics[scale=0.45, viewport=7.5cm 8cm 14.5cm 12.5cm]{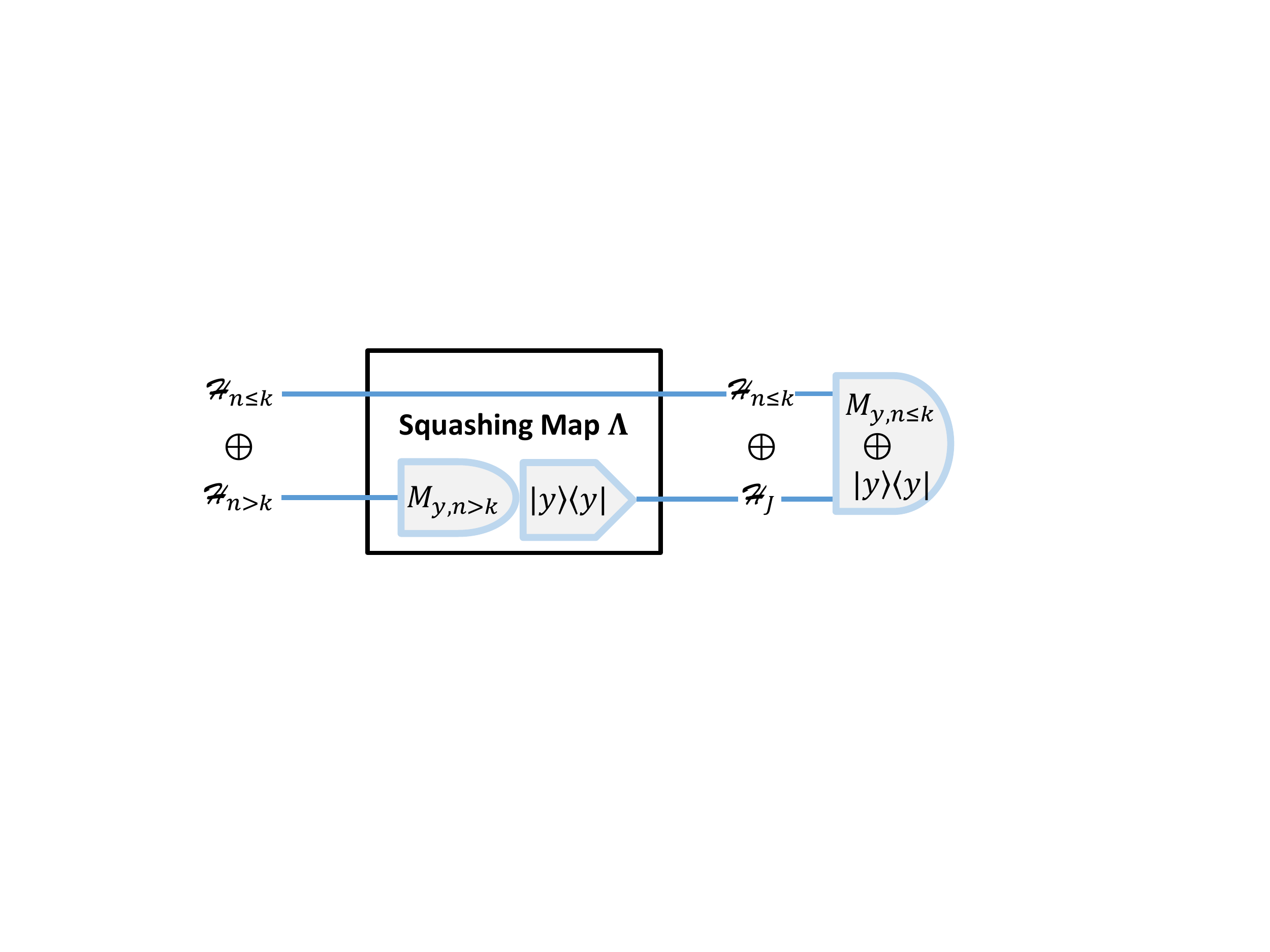} 
   \caption{Constructive description of the squashing map $\Lambda$ for the flag-state squasher. Each line corresponds to a subspace of the input Hilbert space associated with the block-diagonal decomposition of the POVM elements $M_y$ with 
   $y \in \{1, \dots, J\}$, as indicated on the left side.} 
   \label{flagsquasher} 
\end{figure}

The large subspace ${\cal H}_{n>k}$, which in the case of Bob's measurement is infinite dimensional, is simply reduced to a smaller subspace ${\cal H}_{J}$ by performing the measurement 
$\{M_{y,n>k}, y=1,\dots, J\}$ on ${\cal H}_{n>k}$ and flagging the result $y$ into an orthogonal register which replaces the original large subspace. This approach of creating squashing models to smaller Hilbert spaces relies only on the block-diagonal structure 
of the original POVM elements. As soon as that assumption is met, a flag-state squasher can be constructed. In this
 work we apply Theorem~\ref{th:flagstatesquasher} to Bob's states and measurements. As Bob's measurements are 
 block-diagonal for an arbitrary choice of $k$ (see Eq.~\eqref{eq:actual_case}), we can freely choose which large 
 photon-number subspace of Bob to be flagged. For this reason, we refer to the free parameter $k$ as the 
 \emph{photon-number flag threshold}.

As in any case where a squashing map exists mapping the original measurement to an alternative measurement of    
 a smaller dimension, we can assume that the squashing map is part of Eve's action. As a result, we overestimate Eve's power (see below for a detailed explanation), but as a trade-off we can now assume without further loss of generality that Bob receives signals in a reduced, finite-dimensional Hilbert space. So the key-rate optimization problem in 
 Eq.~\eqref{eq:key_minimization} formulated with the squashed states of the form in 
 Eq.~\eqref{eq:virtual_case_state} and POVM elements of the form in Eq.~\eqref{eq:virtual_case}, which is a finite-dimensional convex-optimization problem, will provide a lower bound on the secret-key rate in the actual implementation. Note, however, that the virtual POVM element components  $\tilde{M}_{y,n>k}^{\textrm{B}}=|y\rangle\langle y|$ are projective and orthogonal. Therefore, when her QND measurement result of the total photon number is $n>k$ 
 Eve could perform a strong attack by measuring the incoming signals from Alice with Bob's actual measurement 
 $\{M_{y, n>k}^{\textrm{B}}: y=1,\ldots, J\}$ and then preparing/sending to Bob the flag state $|y'\rangle$  corresponding to her measurement result $y'$. This attack would deterministically trigger the same result $y'$ when Bob performs the virtual measurement $\{\tilde{M}_{y, n>k}^{\textrm{B}}: y=1,\ldots, J\}$ according to the squashing map. Hence, by attributing the squashing map to Eve's action, Eve could completely learn every result of Bob when $n>k$, and so we overestimate the power of Eve as compared with in the actual implementation. For this reason, the flag-state squasher must be accompanied by a constraint that limits the resulting state mostly to the $(n\leq k)$-photon subspace, which is given by the bound $b_k$ in our optimization problem of Eq.~\eqref{eq:key_minimization}.

Finally, we remark that without loss of generality the states $\rho_{\textrm{AB}}$ and $\rho_{\textrm{AB}}^{(n)}$ 
can be assumed to be real-valued. This is because all measurement POVM elements $M_x^{\textrm{A}}$ and $M_y^{\textrm{B}}$ of Alice and Bob can be represented by real-valued matrices and because the objective function 
to be minimized for bounding the key rate in Eq.~\eqref{eq:key_minimization} is a convex function of the state $\rho_{\textrm{AB}}$. For detailed proofs, see for example Sect. V C in Ref.~\cite{ferenczi2012a}.  
We also emphasize that the block-diagonal structure and the real-matrix representation of the state 
$\rho_{\textrm{AB}}$ apply to both the active- and passive-detection schemes. By using a real-matrix representation
of $\rho_{\textrm{AB}}$, the number of free parameters in the key-rate optimization problem of  
Eq.~\eqref{eq:key_minimization} is reduced.

\subsection{Constraints on photon-number distribution}
\label{sect:photon_number_bounds}
To solve the convex-optimization problem in Eq.~\eqref{eq:key_minimization}, we need make use of a flag-state squasher as introduced in Theorem \ref{th:flagstatesquasher} where the small subspace will be chosen to be the incoming subspace containing at most $n=1$ photon, or at most $n=2$ photons. In order to obtain positive key rates, it will be necessary to show that the overlap of the incoming states with this subspace can be lower-bounded by some number $b_k$, $k=1$ or $2$. Following the numerical method developed in our previous study of entanglement verification with efficiency mismatch~\cite{Zhang2017}, we obtain such bounds directly from 
the experimentally observed measurement statistics $p_{\textrm{AB}}(x,y)$. The intuition behind this approach is that higher photon numbers will necessarily lead to double clicks, cross clicks, and/or errors. 

This way of using experimental observations to bound the photon-number distribution  was first established in 
Ref.~\cite{lutkenhaus1999} and further refined in Ref.~\cite{koashi2008}, and then  extended to 
the case of inefficient detectors in Ref.~\cite{Zhang2017}. Note that the theoretical approach is 
independent of the number of spatial-temporal modes that we use (in addition to the polarization degree of freedom). We demonstrate the results of our method here for the two-mode case (with the active-detection scheme) and for the four-mode case (with the passive-detection scheme).

Before explaining the method, we would like to point out that the two properties of the state $\rho_{\textrm{AB}}$
discussed in the above subsection, i.e., the block-diagonal structure with respect to 
various photon-number subspaces and the real-number representation of the density matrix, 
will be used also in the optimization problems formulated in this subsection.   
The second property helps to reduce the number of free parameters in the optimization.

\subsubsection{Active-detection case}
As stated before, the intuition is that as an increasing number of photons are received 
by Bob, the probability of double clicks (clicks at both detectors) will increase and finally surpass the 
double-click probability observed in an experiment. Similar arguments hold for an effective error, which we define below. Thus we will show that the experimental observations allow us to put an upper bound on the probability that the signals received by Bob contain more than any given number of photons.

In order to make this intuition precise, we start by defining the double-click operator
\begin{equation}
F_{DC}=\frac{1}{2} \one^{\textrm{A}}\otimes M_{HV}^\textrm{B}+\frac{1}{2} \one^{\textrm{A}}\otimes M_{DA}^\textrm{B}, 
\label{eq:active_dc_op}
\end{equation}
and the effective-error operator
\begin{align}
F_{EE}&= \frac{1}{2} M_{H}^{\textrm{A}}\otimes (M_{V}^{\textrm{B}}+\frac{1}{2}M_{HV}^{\textrm{B}})+\frac{1}{2} M_{V}^{\textrm{A}}\otimes (M_{H}^{\textrm{B}}+\frac{1}{2}M_{HV}^{\textrm{B}})\notag \\
& +\frac{1}{2} M_{D}^{\textrm{A}}\otimes (M_{A}^{\textrm{B}}+\frac{1}{2}M_{DA}^{\textrm{B}})+\frac{1}{2} M_{A}^{\textrm{A}}\otimes (M_{D}^{\textrm{B}}+\frac{1}{2}M_{DA}^{\textrm{B}}),
\label{eq:active_error_op}
\end{align}
where the pre-factor $1/2$ at each term describes the probability to choose the corresponding measurement basis. 
The form of the effective-error operator is chosen according to the squashing model~\cite{normand2008, Kiyoshi2008} for the active-detection scheme: the double-click events are mapped uniformly randomly in a post-processing step to either of the two single-click events associated with the chosen basis.  In Eqs.~\eqref{eq:active_dc_op} 
and~\eqref{eq:active_error_op}, Alice's measurement operators are ideal measurement operators in the one-photon space
(see Appendix~\ref{sect:Alice_operators}), while Bob's measurement operators are described
in Appendix A and B of Ref.~\cite{Zhang2017}.

We formalize the above intuition by studying the following optimization problems
\begin{equation}
\label{eq:min_dc}
\begin{array}{rc}
d_{n,min}=& \min_{\rho_{\textrm{AB}}^{(n)}} \Tr\left(\rho_{\textrm{AB}}^{(n)}F_{DC}^{(n)}\right) \\
% \left(d_n\equiv \Tr\left(\rho_{\textrm{AB}}^{(n)}F_{DC}^{(n)}\right)\right) \\
 \text{subject to}&  \rho_{\textrm{AB}}^{(n)}\geq 0   \\
& \Tr\left(\rho_{\textrm{AB}}^{(n)}\right)=1  
\end{array}
\end{equation}
and 
\begin{equation}
\label{eq:min_error}
\begin{array}{rc}
e_{n,min}=& \min_{\rho_{\textrm{AB}}^{(n)}} \Tr\left(\rho_{\textrm{AB}}^{(n)}F_{EE}^{(n)}\right) \\
% \left(e_n\equiv\Tr\left(\rho_{\textrm{AB}}^{(n)}F_{EE}^{(n)}\right)\right) \\
 \text{subject to}&  \rho_{\textrm{AB}}^{(n)}\geq 0   \\
& \Tr\left(\rho_{\textrm{AB}}^{(n)}\right)=1  \; .
\end{array}
\end{equation}
The operators $F_{DC}^{(n)}$ and $F_{EE}^{(n)}$ are projections of the operators $F_{DC}$
and $F_{EE}$ onto the $n$-photon subspace of Bob. We remark that the above 
optimizations are over all possible $n$-photon states $\rho_{\textrm{AB}}^{(n)}$, 
while the optimization problems formulated in our previous study of entanglement 
verification with efficiency mismatch~\cite{Zhang2017} run over only the 
states $\rho_{\textrm{AB}}^{(n)}$ satisfying the positive-partial-transpose 
criterion~\cite{Peres1996,Horodecki1996}.

The optimization problems described by Eqs.~\eqref{eq:min_dc} and~\eqref{eq:min_error} 
have the form of semi-definite programs (SDPs). In order to solve them, we utilize 
the YALMIP~\cite{YALMIP} toolbox in MATLAB. From our calculations we make the observation 
that the minimum double-click probability $d_{n,min}$ is monotonically 
non-decreasing as the photon number $n$ goes up. We therefore obtain the inequality
\begin{equation} \label{eq:doubleclick_rela}
d_{n,min} \geq d_{3,min}, \forall  n\geq 3.
\end{equation}
Moreover, we observed the inequality relations 
\begin{equation}\label{eq:efferror_rela1}
e_{n,min} \geq e_{3,min}, \forall  n\geq 3,
\end{equation}
and  
\begin{equation}\label{eq:efferror_rela2}
e_{n,min} \geq e_{min} := \min\{ e_{2,min}, e_{3,min} \}, \forall  n\geq 2.
\end{equation}
We would like to point out that we did not go through the effort to prove the above 
inequalities with analytical methods, though the numerical evidence strongly supports that 
these inequalities hold for an arbitrary active-detection efficiency mismatch. 
In Figs.~\ref{active_doubleclick} and~\ref{active_error}, we report our numerical evidence 
for the specific mismatch model of Table~\ref{active_mismatch}. Especially, one can see from 
these figures that the curve becomes monotonous as the efficiency mismatch increases. 

\begin{figure}[htb!]
\includegraphics[scale=0.53,viewport=3cm 8.6cm 19cm 19.5cm]{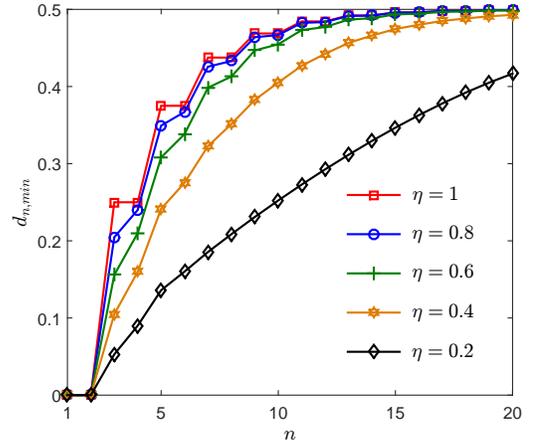}  
\caption{The minimum double-click probability $d_{n,min}$ vs. the photon number 
 $n$ received by Bob for the active-detection mismatch model of 
 Table~\ref{active_mismatch} with $\eta_1=1$ and $\eta_2=\eta$.  Note the monotonicity 
 of each curve as a function of $n$ and that $d_{2,min}$, as well as $d_{1,min}$, is 
 always equal to zero.}
\label{active_doubleclick} 
\end{figure}

In view of Eqs.~\eqref{eq:doubleclick_rela} and~\eqref{eq:efferror_rela1},  we find 
that the double-click probability $d_\text{obs}$ and the effective-error probability
$e_\text{obs}$ observed in practice satisfy 
\begin{equation}
d_\text{obs}=\sum_{n=0}^{\infty}p_n \Tr\left(\rho_{\textrm{AB}}^{(n)}F_{DC}^{(n)}\right) \geq (1-p_0-p_1-p_2)d_{3,min}, \label{eq:actual_doubleclick_bound} 
\end{equation}
and
\begin{equation}
e_\text{obs}=\sum_{n=0}^{\infty}p_n\Tr\left(\rho_{\textrm{AB}}^{(n)}F_{EE}^{(n)}\right) \geq (1-p_0-p_1-p_2)e_{3,min},\label{eq:actual_error_bound}
\end{equation}
by using that $\sum_{n=0}^{\infty}p_n=1$. Hence, we can set the bound  $b_2 \leq p_0+p_1+p_2$ as
\begin{equation}\label{eq:active_b2_bound}
 b_2=1-\min\big(\frac{d_\text{obs}}{d_{3,min}},\frac{e_\text{obs}}{e_{3,min}}\big).
\end{equation}
Note that for the observations simulated in Sect.~\ref{sect:results}, we found that 
$\frac{d_\text{obs}}{d_{3,min}}<\frac{e_\text{obs}}{e_{3,min}}$ and therefore the bound
$b_2=1-\frac{d_\text{obs}}{d_{3,min}}$. Similarly, in view of Eq.~\eqref{eq:efferror_rela2} we have 
\begin{equation}
e_\text{obs}=\sum_{n=0}^{\infty}p_n\Tr\left(\rho_{\textrm{AB}}^{(n)}F_{EE}^{(n)}\right) \geq (1-p_0-p_1)e_{min}. 
\end{equation}
Thus we can obtain a bound $b_1 \leq p_0+p_1$ as % $(p_0+p_1)$ from below. 
\begin{equation}\label{eq:active_b1_bound}
b_1=1-\frac{e_\text{obs}}{e_{min}}.
\end{equation}
In this case, the double-click estimations do not lead to a non-trivial bound on $b_1$ as there exist 
two-photon states that do not lead to double clicks ($d_{2,min}=0$), see Fig.~\ref{active_doubleclick}. 

\begin{figure}[htb!]
\includegraphics[scale=0.53,viewport=3cm 9cm 19cm 19.5cm]{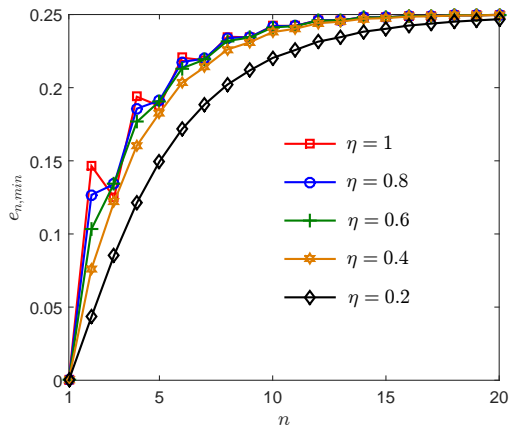}  
\caption{The minimum effective-error probability $e_{n,min}$ vs. the photon number $n$
received by Bob for the active-detection mismatch model of Table~\ref{active_mismatch} 
with $\eta_1=1$ and $\eta_2=\eta$.  Note that $e_{3,min}$ is a lower bound on $e_{n,min}$ 
when $ n \geq 3$ and that $e_{1,min}$ is always equal to zero.}\label{active_error} 
\end{figure}

The above bounds $b_1$ and $b_2$ % in Eqs.~\eqref{eq:active_b2_bound} and~\eqref{eq:active_b1_bound}
together with the flag-state squasher approach for the corresponding subspaces
can be used in the key-rate optimization problem of Eq.~\eqref{eq:key_minimization} when the active-detection scheme is used.

\subsubsection{Passive-detection case}
The passive-detection scheme utilizes a 50/50 beam splitter to passively select a measurement basis, as shown in 
Fig.~\ref{setup}(b). Clearly, the probability that each output arm of the beam splitter contains at least one photon is given by  $1-2^{-(n-1)}$. We therefore have the following expectations: 1) The probability of simultaneous photon detections at both output arms (referred to as cross clicks) would increase with the photon number $n$; 2) in the limit of large $n$, the cross-click events would happen with near certainty. These motivate us to consider the associated cross-click operator 
\begin{equation}
F_{CC}=\one^{\textrm{A}}\otimes M_{CC}^\textrm{B},
\end{equation}
with  $M_{CC}^\textrm{B}$ being Bob's cross-click POVM element (see Appendix A and B of Ref.~\cite{Zhang2017} for the
derivation and expression of $M_{CC}^\textrm{B}$). To obtain bounds on the photon-number distribution using experimental observations, we thus consider the optimization problem 
\begin{equation}
\label{eq:min_cc}
\begin{array}{rc}
c_{n,min}=& {\min_{\rho_{\textrm{AB}}^{(n)}}}  \Tr\left(\rho_{\textrm{AB}}^{(n)}F_{CC}^{(n)}\right) \\
% \left(c_n\equiv \Tr\left(\rho_{\textrm{AB}}^{(n)}F_{CC}^{(n)}\right)\right) \\
 \text{subject to}&  \rho_{\textrm{AB}}^{(n)}\geq 0   \\
& \Tr\left(\rho_{\textrm{AB}}^{(n)}\right)=1  \; .
\end{array}
\end{equation} 
Here  $F_{CC}^{(n)}$ is the $n$-photon component  of the cross-click operator $F_{CC}$. 

Again, we solve this optimization problem using the YALMIP toolbox~\cite{YALMIP} in MATLAB. 
The numerical solutions of the optimization problem in Eq.~\eqref{eq:min_cc} provide 
strong evidence that the cross-click probability $c_{n,min}$ increases monotonically with $n$
and converges to the unit value $1$ for an arbitrary passive-detection efficiency mismatch.  
We would like to point out that any evaluation of secret-key rates using our approach requires 
solving an SDP problem, such as those in Eqs.~\eqref{eq:min_dc},~\eqref{eq:min_error} and~\ref{eq:min_cc}, 
thus allowing the validification of the working assumption for a chosen mismatch model and parameters. 
Particularly, the numerical evidence for our mismatch model and parameters is shown in 
Fig.~\ref{passive_crossclick}, which suggests the following two inequalities 
\begin{equation}\label{eq:crossclick_rela1}
c_{n,min}\geq c_{3,min}, \forall n\geq 3,
\end{equation}
and 
\begin{equation}\label{eq:crossclick_rela2}
c_{n,min}\geq c_{2,min}, \forall n\geq 2.
\end{equation}

\begin{figure}[htb!]
\includegraphics[scale=0.53,viewport=3cm 9cm 19cm 19.5cm]{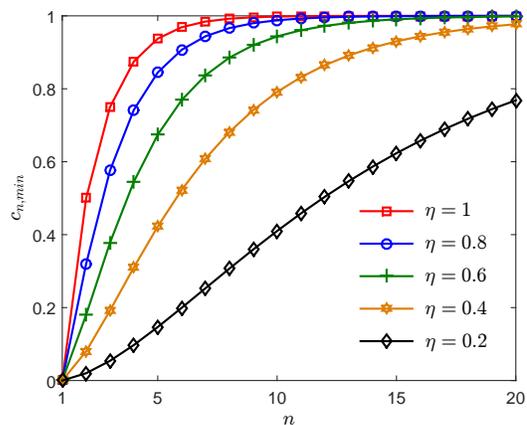}  
\caption{The minimum cross-click probability $c_{n,min}$ vs. the photon number 
 $n$ received by Bob for the passive-detection mismatch model of 
 Table~\ref{passive_mismatch} with $\eta_1=1$ and $\eta_2=\eta$. Note the monotonicity 
 of each curve as a function of $n$, supporting the inequalities in  
 Eqs.~\eqref{eq:crossclick_rela1} and~\eqref{eq:crossclick_rela2}.}
   \label{passive_crossclick} 
\end{figure}

The inequality in Eq.~\eqref{eq:crossclick_rela1} tells us that the cross-click 
probability $c_{\rm{obs}}$  observed in practice satisfies 
\begin{equation}
c_{\text{obs}}=\sum_{n=0}^{\infty} p_n  \Tr\left(\rho_{\textrm{AB}}^{(n)}F_{CC}^{(n)}\right)\geq \left( 1-p_0-p_1-p_2 \right)c_{3,min}. 
\label{eq:actual_cross_bound}
\end{equation}
Here we used the fact that $\sum_{n=0}^{\infty}p_n=1$. Thus we obtain a bound 
$b_2\leq(p_0+p_1+p_2)$ as  
\begin{equation}\label{eq:passive_b2_bound}
b_2=1-\frac{c_\text{obs}}{c_{3,min}}.
\end{equation}
Similarly, from Eq.~\eqref{eq:crossclick_rela2} we can obtain a bound $b_1\leq (p_0+p_1)$ as
\begin{equation}\label{eq:passive_b1_bound}
b_1=1-\frac{c_\text{obs}}{c_{2,min}}.
\end{equation}

The above bounds $b_1$ and $b_2$ together with the flag-state squasher approach
for the corresponding subspaces can be used in the key-rate optimization problem 
of Eq.~\eqref{eq:key_minimization} when the passive-detection scheme is used.

\section{Secret-key rates  with simulated observations} 
\label{sect:results}
As pointed out before, the method developed in Sect.~\ref{sect:method} allows a security analysis of a QKD setup with an arbitrary detection-efficiency mismatch. Any such security analysis requires the determination of constraints on the probability of the state in a subspace containing at most a given number of photons, and then a reliable key-rate lower bound can be obtained using those constraints together with a flag-state squasher. We now illustrate our approach for the specific mismatch models of Tables~\ref{active_mismatch} and \ref{passive_mismatch}. As the security analysis usually requires as input some data observed in experiments, we replace here the experiments by simulations according to a simple quantum-optical model. We specify this toy model below, but it is important to point out that this toy model is not part of the security analysis, or in anyway an assumption on which our security proof itself is based. We also  emphasize that 
the numerical values for the key rate reported in this section are reliable in the sense that they are computed 
according to the lower bound $K_{\text{lb}}$ on the key rate (see Eq.~\eqref{eq:key_lower_bound}). 

\subsection{Data simulation}
\label{sect:toy_model}
We study a BB84 protocol with an ideal single-photon source using polarization encoding.  As described in 
Sect.~\ref{sect:exp_con}, at each round of the protocol Alice prepares one of four possible single-photon 
polarization states selected uniformly randomly. Bob can use either the active- or passive-detection scheme. 
In the active-detection scheme, we assume that at each round Bob can randomly select the key-generation basis 
with probability $p=1/2$.  The single photon prepared by Alice is transmitted through Eve's domain to Bob.  
We model the  corresponding quantum channel as a depolarizing channel 
 $\Lambda(\rho) = (1-\omega) \rho + \omega \frac{1}{2} \openone$  with depolarizing probability $\omega$;
  additionally, the single-photon transmission efficiency over the channel is $t$. In order to introduce multiple
  detector clicks, Eve intercepts in our channel model with probability $r$ the single photon and resends 
  multiple photons to Bob.  Specifically, Eve resends randomly polarized $m$ photons in the state 
\begin{equation}
\rho_m=\frac{1}{2m\pi}\int_0^{2\pi} \mathrm{d}\theta \left(\hat{a}_\theta^\dagger\right)^m \ket{0}\bra{0} \left(\hat{a}_\theta\right)^m.
\label{sim_state}
\end{equation}
Here, the photon-creation operator $\hat{a}_\theta^\dagger$ is given in terms of the operators  
$\hat{a}_H^\dagger$ and $\hat{a}_V^\dagger$ of the respective linear polarizations as
$\hat{a}_\theta^\dagger=\cos(\theta) \hat{a}_H^\dagger+\sin(\theta)\hat{a}_V^\dagger$. 
In our simulations, we will choose the photon number $m=2$. 

When applying the flag-state squasher approach, we  choose to separate  
either the $(n\leq 1)$-photon or the  $(n\leq 2)$-photon subspace from their respective complements. 
That is, we set the photon-number flag threshold to be $k=1$ or $2$. 
In our efficiency-mismatch models we consider several spatial-temporal modes, in addition to 
the polarization mode. We note that the detectors used are assumed to be free of dark counts. 
 In our toy quantum channel, we additionally assume that the optical 
signals are uniformly randomly distributed over all considered spatial-temporal modes.

\subsection{Key rates in the absence of mismatch: Trade-offs between transmission efficiency and detection efficiency}
As mentioned in the introduction, when there is no efficiency mismatch between the detectors used 
in the measurement device, one can pull the detection inefficiency out of the detectors and into the 
channel action, creating an effective transmission loss. Consequently, the measurement device now 
is described by an ideal-detector setup for which a squashing model~\cite{normand2008, Kiyoshi2008, Gittsovich2014}
 exists, and so one can execute a full security proof. However, the resulting key rate might be conservatively low, because the existing security proof assumes that the photon loss during the actual transmission, as well as that due to the detection inefficiency, can be manipulated by Eve while under the original description of Bob's measurement device the photon loss inside of the device cannot be accessed by Eve. Such fact has been explicitly pointed out in literature such as in Ref.~\cite{Curty2004}. So while it is known that this is an overly pessimistic assumption, the issue is that proof techniques were missing to treat the security assuming the detection efficiency 
to be not accessible by Eve. We can tackle this question now with the techniques developed in this work. 

With our numerical method, we can prove the security of a QKD protocol with
arbitrary measurement operators as long as they are well characterized. In particular,
we can characterize the detection efficiency of each detector in a measurement device, 
and so we can determine the corresponding measurement operators (see Appendix A and B of Ref.~\cite{Zhang2017}). 
% doesn't rely the relationship and we can consider general measurement operators....
In this way,  we can study the individual effects of transmission efficiency and detection 
efficiency on the secret-key rate. To demonstrate these effects, for this particular result we assume for simplicity that each optical signal arriving at Bob contains no more than two photons, rather than using our flag-state squasher  approach. 

The results % \YZ{reliable key-rate lower bounds obtained by our numerical method} 
are shown in Fig.~\ref{trade-off}. From this figure, one can see that given 
the fixed total photon loss over both transmission and detection, Alice and Bob 
can distill more secret keys if they consider detection inefficiency and transmission loss separately  
rather than lumping these two kinds of loss together in the security proof. In particular,
when the product $t\eta$ is fixed,  the higher the value of $t$, the higher the secret-key rate is. 
On the other hand, when $t$ and $\eta$ are lumped together as an effective transmission efficiency 
$t\eta$, our numerical method provides the same key-rate lower bound as the standard security proofs 
with the help of the squashing model~\cite{normand2008,Kiyoshi2008,Gittsovich2014} for treating 
multiple-detection events. Specifically, the key-rate lower bound $\frac{1}{4}p_{\text{det}}\left(1-2 h(e)\right)$,
where $p_{\text{det}}$ is the detection probability at the key-generation basis, $e$ is the qubit error rate
and $h(e)$ is the binary entropy function,  is satisfied by the results plotted in 
Fig.~\ref{trade-off} when $\eta=1$. 

\begin{figure}[htb!]
\includegraphics[scale=0.53,viewport=2.1cm 4.8cm 19cm 16.1cm]{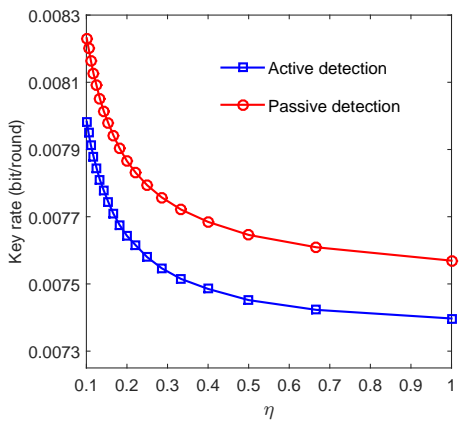}  
\caption{Reliable key-rate lower bound in bits per round  
obtained by our numerical method vs. the detection efficiency $\eta$ of all detectors used in
Bob's measurement device as shown in Fig.~\ref{setup}. We consider both the active- and 
passive-detection schemes. For data simulation, we fix the depolarizing 
probability $\omega=0.05$, the multi-photon probability $r=0.05$, and the 
product of transmission efficiency $t$ and detection efficiency $\eta$ 
to be $t\eta=0.1$. We choose these values just for ease of graphical 
illustrations. We remark that under each detection scheme, the 
probability distribution observed by Alice and Bob does not change 
with $\eta$ as long as the simulation parameters $\omega$, $r$ and $t\eta$
are fixed.}
\label{trade-off} 
\end{figure}

We also performed numerical calculations, not presented here, which show that the higher the 
multi-photon probability $r$, the more significant improvement in the secret-key rate is achieved 
when separating $t$ and $\eta$ in the security proof. Particularly, we observed that when the 
optical signal has no multi-photon component (i.e., $r=0$), the secret-key rate is independent 
of $\eta$ as long as $\omega$ and $t\eta$ are fixed. However, in practice multiple-detection 
events occur due to the use of sources containing multi-photon states, cross talks in fibers, 
or dark counts in detectors.

\subsection{Key rates with active-detection efficiency mismatch}
Let us study the dependence of the secret-key rate on the detection-efficiency mismatch with the 
active-detection scheme. We consider two scenarios: In the {\em one-mode} scenario all photons received 
by Bob are in the same spatial-temporal mode, and the two detectors labelled by `$H/D$' and `$V/A$' in 
Fig.~\ref{setup}(a) have efficiencies $\eta_1$ and $\eta_2$ respectively; in the {\em two-mode} scenario 
the photons received by Bob can stay in one of two possible spatial-temporal modes or in a 
coherent superposition of the two spatial-temporal modes.  
The efficiency mismatch for the combinations of spatial-temporal modes and polarization 
detectors is shown as in Table~\ref{active_mismatch}.  For security proofs, we make use of and 
compare two different assumptions/techniques to deal with potential multi-photon signals arriving 
at Bob's detectors: we either assume that each signal received by Bob contains no more than 
two photons, or we  prove security without such assumption. In the latter case we apply a flag-state 
squasher with the photon-number flag threshold $k=2$, and in the key-rate optimization problem of 
Eq.~\eqref{eq:key_minimization} we incorporate 
the lower bounds $b_1$ and $b_2$ on the photon-number probabilities $(p_0+p_1)$ and $(p_0+p_1+p_2)$. 
These bounds are based on observations and are discussed in Sect.~\ref{sect:photon_number_bounds} 
(see Eqs.~\eqref{eq:active_b2_bound} and~\eqref{eq:active_b1_bound}). 

\begin{figure}[hb!t]
   \includegraphics[scale=0.53, viewport=2.5cm 5cm 17.5cm 15.5cm]{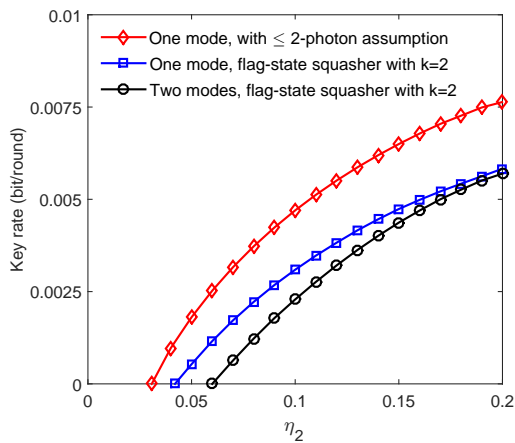}  
   \caption{ Reliable key-rate lower bound in bits per round 
   obtained by our numerical method vs. the detection efficiency $\eta_2$ of the detector   
   labelled by `$V/A$' (for the signals stayed in the first spatial-temporal mode) in the active-detection 
   scheme of Fig.~\ref{setup}(a). For data simulation, we fix the detection efficiency of the detector 
   labelled by `$H/D$' (for the signals stayed in the first spatial-temporal mode) to $\eta_1=0.2$. 
   We also fix the depolarizing probability $\omega=0.05$, the multi-photon probability $r=0.05$, 
   and the transmission efficiency $t=0.5$ (corresponding to $3$dB loss). We remark that for the 
   active-detection scheme the key rate scales linearly with the probability $p$ for Bob to select 
   the key-generation basis when other simulation parameters are fixed, and that for the results
   shown in this figure and the following Fig.~\ref{fig:active_distance} the probability $p$ is 
   fixed to be $1/2$ according to the data-simulation model detailed in Sect.~\ref{sect:toy_model}.}    
	\label{fig:active_mismatch}
\end{figure}

The typical results are %\YZ{dependence of the numerical key-rate lower bound on efficiency mismatch} 
 shown in Fig.~\ref{fig:active_mismatch}. We can make directly several observations 
from Fig.~\ref{fig:active_mismatch}:
\begin{enumerate}
\item The larger the efficiency mismatch, the lower the secret-key rate is. There exists a threshold for the efficiency mismatch beyond which it is not possible for Alice and Bob to distill secret keys.
\item Making assumptions on Eve's attack strategy, such as assuming that no more than two photons are being 
resent from Eve to Bob, can overestimate the true secret-key rate computed according to the analysis without making that assumption. 
\item The spatial-temporal-mode-dependent mismatch helps Eve to attack the QKD system. Our results show that Eve's corresponding freedom to manipulate the detection efficiencies decreases the secret-key rate.
\item If there is no efficiency mismatch, then the secret-key rate does not differ whether we consider one or 
two spatial-temporal modes. Note that in this case the lower bounds $b_1$ and $b_2$ in Eqs.~\eqref{eq:active_b1_bound} and~\eqref{eq:active_b2_bound} are independent of the number of spatial-temporal modes, and so 
 is the key-rate optimization problem in Eq.~\eqref{eq:key_minimization}. 
\end{enumerate}

We can also study the dependence of the secret-key rate on the transmission efficiency or distance 
when fixing other data-simulation parameters.  For this, we assume that the transmission 
efficiency $t$ is determined by the transmission distance $L$ in kilometers according to
$t=10^{-L/50}$. Also, as mentioned in Sect.~\ref{sect:toy_model} we assume that the detectors used are
free of dark counts. The typical results as shown in Fig.~\ref{fig:active_distance} % illustrate that
suggest the following observations.  First, with the increase of the transmission distance $L$, 
the key-rate lower bound obtained decreases. Particularly, when the distance $L$ is large and 
in the absence of dark counts, the key-rate lower bound obtained decreases exponentially with 
the increase of $L$. Second, in the limit of large distance $L$, the key-rate lower bounds 
obtained under different efficiency-mismatch models or using different assumptions/techniques 
to handle multi-photon signals approach to each other. % 2) In the limit of large distance $L$, the 
%key-rate lower bound obtained converges to the same value, regardless of which specific 
%efficiency-mismatch model is considered or whether a photon-number cut-off is assumed 
%for security analysis.}

\begin{figure}[hb!t]
   \includegraphics[scale=0.53, viewport=2.5cm 8.5cm 17.5cm 20cm]{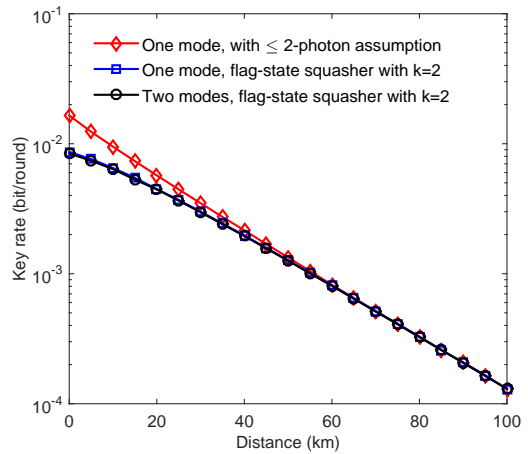}  
   \caption{Reliable key-rate lower bound in bits per round obtained by our numerical method 
   vs. the transmission distance in kilometers from Alice to Bob with the active-detection scheme 
   of Fig.~\ref{setup}(a).
   For data simulation, we fix the detection efficiencies of the two detectors labelled by `$H/D$' 
   and by `$V/A$' (for the signals stayed in the first spatial-temporal mode) 
   to $\eta_1=0.2$ and $\eta_2=0.18$, respectively.  We also fix the depolarizing probability 
   $\omega=0.05$ and the multi-photon probability $r=0.05$. Note that 
   % when the transmission distance  is small and 
   when the same photon-number flag threshold $k=2$ is used, 
   the key-rate lower bound obtained for the case of mode-dependent efficiency mismatch is slightly
   lower than that for the case of mode-independent efficiency mismatch, although due to the use of
   log-scale in the plot such a difference is hard to be visible.}  
  % labelled by blue squares and black circles for different mismatch models differ from each each  
  % We remark that for the 
%   active-detection scheme the key rate scales linearly with the probability $p$ for Bob to select 
%   the key-generation basis when other simulation parameters are fixed, and that for the results
%   shown here the probability $p$ is fixed to be $1/2$ according to the data simulation detailed
%   in Sect.~\ref{sect:toy_model}.   
	\label{fig:active_distance}
\end{figure}

\subsection{Key rates with passive-detection efficiency mismatch} 
As in the active-detection scheme, we consider two scenarios: In the {\em single-mode} scenario 
all photons received by Bob are in the same spatial-temporal mode, and the four detectors labelled 
by `$H$', `$V$', `$D$' and `$A$' in Fig.~\ref{setup}(b) have efficiencies 
$\eta_1$, $\eta_2$, $\eta_2$, $\eta_2$ respectively; in the {\em four-mode } scenario the  
photons received by Bob can stay in one of four possible spatial-temporal modes or in a 
coherent superposition of the four spatial-temporal modes. The efficiency mismatch 
in the four spatial-temporal modes is  shown as in Table~\ref{passive_mismatch}. 
In the security proofs,  we again compare the flag-state squasher approach with the photon-number 
cut-off assumption.  Note that for the case with one spatial-temporal mode, we apply a  
flag-state squasher with the photon-number flag threshold $k=2$, and at the same time 
we incorporate the lower bounds $b_1$ and $b_2$ in Eqs.~\eqref{eq:passive_b1_bound} 
and~\eqref{eq:passive_b2_bound}. For the case with four spatial-temporal modes, instead we apply 
a flag-state squasher with the photon-number flag threshold $k=1$, 
and exploit the corresponding photon-number distribution bound  $b_1$. 
We do not use the tighter approach with the larger photon-number flag threshold $k=2$,  
due to the large dimensionality of the corresponding key-rate optimization problem in the presence of four 
spatial-temporal modes.  The dependence of the secret-key rate
on the detection-efficiency mismatch is shown in Fig.~\ref{fig:passive_mismatch}. Similar to  
the active-detection case, the results in Fig.~\ref{fig:passive_mismatch} suggest that the larger 
the efficiency mismatch, the lower the secret-key rate is. When the efficiency mismatch is 
large enough, it is not possible for Alice and Bob to distill secret keys. 
The results also suggest that spatial-temporal-mode-dependent mismatch helps Eve to 
attack the QKD system.

\begin{figure}[hb!t]
   \includegraphics[scale=0.53, viewport=2.5cm 4.5cm 17.5cm 15.5cm]{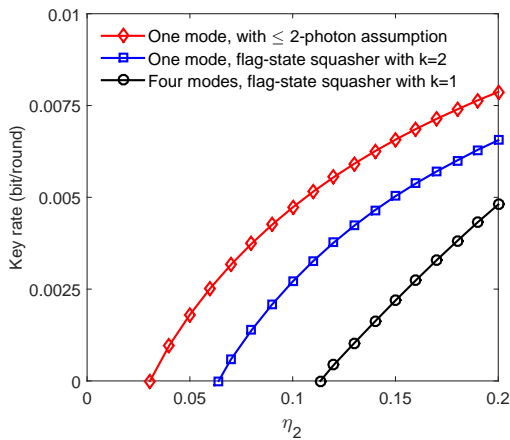}  
   \caption{Reliable key-rate lower bound in bits per round obtained by our numerical method vs. 
   the detection efficiency $\eta_2$ of the detectors   
   labelled by `$V$', `$D$' or `$A$' (for the signals stayed in the first spatial-temporal mode) in 
   the passive-detection scheme of Fig.~\ref{setup}(b). For data simulation, we fix the detection  efficiency 
   of the detector labelled by `$H$' (for the signals stayed in the first spatial-temporal mode) to 
   $\eta_1=0.2$. We also fix % the probability of selecting the key-generation basis by Bob $p=0.5$, 
   the depolarizing probability $\omega=0.05$, the multi-photon probability $r=0.05$, and the 
   transmission efficiency $t=0.5$ (corresponding to $3$dB loss).    
   }
	\label{fig:passive_mismatch}
\end{figure}

We remark that one cannot straightforwardly compare the robustness of the active- and passive-detection 
schemes against efficiency mismatch for distilling secret keys via Figs.~\ref{fig:active_mismatch} 
and~\ref{fig:passive_mismatch}. The reasons are as follows: First, there is no one-to-one 
correspondence between the two mismatch models given in Tables~\ref{active_mismatch} 
and~\ref{passive_mismatch}, for the active- and passive-detection schemes respectively. 
Second, for spatial-temporal-mode-dependent mismatch, in the active-detection scheme we 
considered two spatial-temporal modes and used the photon-number flag threshold $k=2$ as well as 
 the corresponding lower bounds on the photon-number probabilities $(p_0+p_1)$ 
and $(p_0+p_1+p_2)$. However, in the passive-detection scheme we considered four 
spatial-temporal modes and used the smaller photon-number flag threshold $k=1$ 
as well as the corresponding lower bound on the photon-number probability $(p_0+p_1)$. 
The higher the photon-number flag threshold and the more constraints on the photon-number 
distribution, the higher the secret-key rate certified by our method is. We emphasize  
that here we have developed a general method for proving security of practical 
QKD protocols with efficiency mismatch.  How to optimize our method and 
improve the secret-key rates certified will require future study.

We can also study the dependence of the secret-key rate on the transmission distance.  
Similar to the active-detection case, the typical results as shown in Fig.~\ref{fig:passive_distance} % illustrate that
suggest the following observations. First, in the limit of large transmission distance $L$, 
the key-rate lower bounds obtained under different efficiency-mismatch models or using 
different assumptions/techniques to handle multi-photon signals approach to each other.
Second, when the transmission distance $L$ is large, the key-rate lower bound obtained 
decreases exponentially with the increase of $L$. We note that when the distance $L$ is 
small, the key-rate lower bound obtained by the flag-state squasher approach with the 
photon-number flag threshold $k=1$ depends on $L$ in a non-monotonic way.  Such 
non-monotonic behaviour is understandable considering the following two competing facts: 
1) With the increase of $L$, the cross-click probability decreases and so the 
lower bound on the photon-number probability $(p_0+p_1)$ increases, which is 
helpful for our numerical method to distill secret keys; 2) with the increase of $L$, 
the detection probability decreases, which would result in a decrease of the key rate.
By using the larger photon-number flag threshold $k=2$, the above non-monotonic behaviour
disappears as we verified for the case with one spatial-temporal mode, see Fig.~\ref{fig:passive_distance}.  
% 2) In the limit of large distance $L$, the 
%key-rate lower bound obtained converges to the same value, regardless of which specific 
%efficiency-mismatch model is considered or whether a photon-number cut-off is assumed 
%for security analysis.}

\begin{figure}[hb!t]
   \includegraphics[scale=0.53, viewport=2.5cm 8.5cm 17.5cm 20cm]{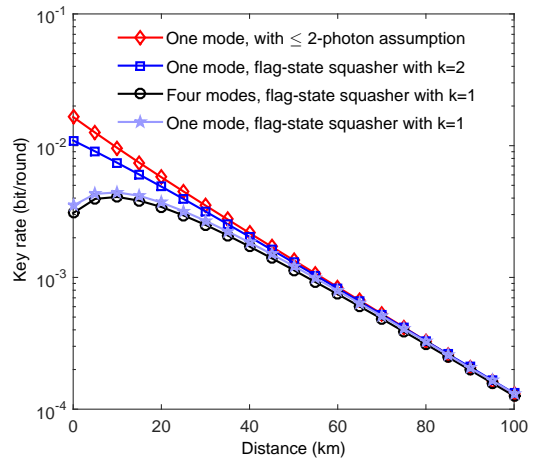}  
   \caption{Reliable key-rate lower bound in bits per round obtained by our numerical method 
   vs. the transmission distance in kilometers from Alice to Bob with the passive-detection scheme 
   of Fig.~\ref{setup}(b). For data simulation, we fix the detection efficiencies $\eta_1=0.2$
   and $\eta_2=0.18$, where $\eta_1$ and $\eta_2$ have the same meanings as those in 
   Fig.~\ref{fig:passive_mismatch}.  We also fix the depolarizing probability 
   $\omega=0.05$ and the multi-photon probability $r=0.05$. Note that 
   when the same photon-number flag threshold $k=1$ is used, 
   the key-rate lower bound obtained for the case of mode-dependent efficiency mismatch 
   is slightly lower than that for the case of mode-independent efficiency mismatch.
   Also, by comparing the results obtained using the photon-number flag threshold $k=1$ 
   with those obtained using $k=2$ in the case of one spatial-temporal mode, we can see
   that the usage of the photon-number flag threshold $k=1$ can induce a non-monotonic 
   behaviour of the obtained key-rate lower bound as a function of the transmission distance.}  
  % labelled by blue squares and black circles for different mismatch models differ from each each  
  % We remark that for the 
%   active-detection scheme the key rate scales linearly with the probability $p$ for Bob to select 
%   the key-generation basis when other simulation parameters are fixed, and that for the results
%   shown here the probability $p$ is fixed to be $1/2$ according to the data simulation detailed
%   in Sect.~\ref{sect:toy_model}.   
	\label{fig:passive_distance}
\end{figure}

\section{Conclusion}
\label{sect:conclusion}
The security proof of QKD usually assumes that the threshold detectors used
 have the same detection efficiency. However, in practice, their detection efficiencies 
can show a mismatch, either due to the manufacturing and setup, or the influence 
by  Eve (for example, by controlling the spatial-temporal-mode-dependent coupling 
of an optical signal with a detector).  In this work we present an approach that 
allows to lower-bound the secret-key rate of a QKD setup with an arbitrary, but 
characterized detection-efficiency mismatch. We formulate the key-rate calculation as a 
convex-optimization problem. In order to prove security without relying on a 
cut-off of photon numbers in the optical signal, we exploit the bounds 
on the photon-number distribution obtained directly from experimental observations 
 with the help of semi-definite programs (SDPs), 
and simplify the key-rate optimization problem by introducing a flag-state 
squashing map. The key-rate optimization 
problem formulated is based on the practical measurement operators that
depend on the characterized efficiency mismatch. Therefore, we can 
study the effect of efficiency mismatch on the secret-key rate. 

We illustrate the power of our method with numerical simulations, demonstrating 
that our method can be numerically well handled even in the presence of 
spatial-temporal-mode-dependent mismatch. Our method is especially applicable 
to free-space QKD where spatial-temporal-mode-dependent mismatch can be easily 
induced by Eve as demonstrated in Refs.~\cite{Rau2014, Shihan2015, Chaiwongkhot2019}.

Moreover, with our method, one can clearly see the individual effects of transmission loss
and detection inefficiency on the secret-key rate (see Fig.~\ref{trade-off}). 
In the particular case of no mismatch, the simulation results show that 
our method provides a tighter lower bound on the secret-key rate than 
the squashing model~\cite{normand2008, Gittsovich2014, Kiyoshi2008} when we 
separate detection inefficiency (out of the domain of Eve) from 
transmission loss (in the domain of Eve). \\

\noindent{\textbf{Note added.}} After the submission of our work, 
we noticed that a related work by Trushechkin appeared on arXiv, see 
Ref.~\cite{Trushechkin2020}.  In contrast to our numerical bounds on 
the photon-number distribution obtained by solving semidefinite programs,
Trushechkin~\cite{Trushechkin2020} derived analytical bounds for the 
active-detection case.  These analytical bounds can be combined  
with the flag-state squasher introduced in our work for a security 
proof without a cut-off of photon numbers in the optical signal.  
Motivated by Trushechkin's work~\cite{Trushechkin2020} and the 
construction of squashing models presented in Ref.~\cite{normand2008}, 
we can derive better analytical bounds on the photon-number distribution. 
 We will %for both the active- and the passive-detection schemes.
present the details of these analytical bounds and their applications
in the future work. \\
%In parallel with our work (detailed in Ref.~\cite{shalm:qc2019}), another two 
%works~\cite{li:qc2019, liu:qc2019} also demonstrate device-independent randomness expansion. 
%In contrast with our work, Refs.~\cite{li:qc2019, liu:qc2019} exploit the standard spot-checking 
%protocol, which requires both uniform and highly biased random bits as the protocol input. 
%On the other hand, both works perform security analysis against quantum side information. 
%With respect to proof techniques, Ref.~\cite{li:qc2019} and Ref.~\cite{liu:qc2019} employ 
%quantum probability estimation~\cite{zhang_y:2020b,knill:qc2018a} and entropy 
%accumulation~\cite{dupuis:qc2016a,arnon-friedman:qc2018a} respectively. 
%In particular, the system detection efficiency achieved in Ref.~\cite{li:qc2019} is high 
%enough ($>80\%$) such that the numerical implementation of quantum probability estimation 
%requires only a few hours.  

\begin{acknowledgments}
We thank Shihan Sajeed, Poompong Chaiwongkhot, and Vadim Makarov
for many useful discussions and comments.  We gratefully acknowledge supports
through the Office of Naval Research (ONR), the Ontario Research Fund (ORF), 
the Natural Sciences and Engineering Research Council of Canada (NSERC), and 
Industry Canada. Financial
support for this work has been partially provided
by Huawei Technologies Canada Co., Ltd.
\end{acknowledgments}

% \newpage
\appendix
\section{Alice's measurement operators}
\label{sect:Alice_operators}
In the source-replacement description of a BB84-QKD protocol with an ideal single-photon source using 
polarization encoding,  the quantum system A held by Alice is two-dimensional and Alice performs 
ideal one-qubit measurements with perfect detection efficiency. In particular, Alice's 
measurement operators $M_H^{\textrm{A}}$, $M_V^{\textrm{A}}$, $M_D^{\textrm{A}}$, and 
$M_A^{\textrm{A}}$ are ideal polarization-measurement operators in the one-photon space. 
In the basis $\{\Ket{1_H,0_V}_{\textrm{A}}, \Ket{0_H,1_V}_{\textrm{A}}\}$ 
of Alice's one-photon space over two polarization modes, these operators are represented as
\begin{align}
 & M_H^{\textrm{A}}=\left(\begin{array}{c c}
       1 &   0     \\
       0 &   0     
     \end{array}\right), \hspace{0.8cm}
 M_V^{\textrm{A}}=\left(\begin{array}{c c}
       0 &   0     \\
       0 &   1     
     \end{array} \right),     \notag \\
 & M_D^{\textrm{A}}=1/2\left(\begin{array}{c c}
       1 &   1     \\
       1 &   1     
     \end{array} \right), \hspace{0.2cm} % \text{ and }
 M_A^{\textrm{A}}=1/2\left(\begin{array}{c c}
       1 &   -1     \\
       -1 &   1     
     \end{array} \right).      
     \label{eq:pauli_xz} 
\end{align}

% \bibliography{Security_proof_with_mismatch}

\end{document}
%
% ****** End of file template.aps ******